\newtheorem{theorem}{Theorem}[section]
\newtheorem{lemma}[theorem]{Lemma}
\newtheorem{corollary}[theorem]{Corollary}
\newtheorem{remark}[theorem]{Remark}
\newtheorem{example}[theorem]{Example}
\newcommand{\ma}{\mathcal}
\newcommand{\s}{\subseteq}
\newcommand{\fr}{\frac}
\begin{document}
\title{Centralized coded caching schemes: \\A hypergraph theoretical approach}

\author{Chong Shangguan, Yiwei Zhang and Gennian Ge
\thanks{The research of G. Ge was supported by the National Natural Science Foundation of China under Grant Nos. 11431003 and 61571310.}
\thanks{C. Shangguan is with the School of Mathematical Sciences, Zhejiang University, Hangzhou 310027, China (email: 11235061@zju.edu.cn).}
\thanks{Y. Zhang is with the School of Mathematical Sciences, Capital Normal University, Beijing 100048, China (email: rexzyw@163.com).}
\thanks{G. Ge is with the School of Mathematical Sciences, Capital Normal University,
Beijing 100048, China. He is also with Beijing Center for Mathematics and Information Interdisciplinary Sciences, Beijing 100048, China (e-mail: gnge@zju.edu.cn).}
}

\date{}\maketitle

\begin{abstract}

The centralized coded caching scheme is a technique proposed by Maddah-Ali and Niesen as a solution to reduce the network burden in peak times in a wireless network system. Yan et al. reformulate the problem as designing a corresponding placement delivery array and propose two new schemes from this perspective. These schemes above significantly reduce the transmission rate $R$, compared with the uncoded caching scheme. However, to implement the new schemes, each file should be cut into $F$ pieces, where $F$ increases exponentially with the number of users $K$. Such constraint is obviously infeasible in the practical setting, especially when $K$ is large. Thus it is desirable to design caching schemes with constant rate $R$ (independent of $K$) as well as small $F$.

In this paper we view the centralized coded caching problem in a hypergraph perspective and show that designing a feasible placement delivery array is equivalent to constructing a linear and (6, 3)-free 3-uniform 3-partite hypergraph. Several new results and constructions arise from our novel point of view.
First, by using the famous (6, 3)-theorem in extremal combinatorics, we show that constant rate caching schemes with $F$ growing linearly with $K$ do not exist. Second, we present two infinite classes of centralized coded caching schemes, which include the schemes of Ali-Niesen and Yan et al. as special cases, respectively. Moreover, our constructions show that constant rate caching schemes with $F$ growing sub-exponentially with $K$ do exist.
\end{abstract}

\begin{keywords}
Centralized coded caching, placement delivery array, hypergraph, (6, 3)-free
\end{keywords}

\section{Introduction}

Video delivery has become the main driving factor for the wireless data traffic in our daily life and it is faced with a dramatic increasing demand \cite{cisco}. Suppose that we have a server with a large library of contents connecting to a group of users. At certain times each user may demand a specific file from the server. Excessive demands at the same time would often jam the wireless traffic, leading to delays and overloads in the system and then poor user experience. Therefore, there have been great interests from both academia and industry to solve this problem. The solution is to take advantages of the memories distributed across the network, especially those close to the end users, to duplicate contents. We call the duplication of the contents as {\it caches}. The system allocates some fractions of the contents into the cache of each user when the network load is low, thus in peak times user requests can be served from these caches. In this manner we can reduce the network burden and smooth the network traffic.

In their seminal work on coded caching schemes \cite{AN}, Maddah-Ali and Niesen propose the {\it centralized coded caching scheme} (or CCC scheme for simplicity), where the term ``centralized" means that we only have one server in the network in charge of coordinating all the transmissions. The CCC scheme contains two phases: the content placement phase, where certain packets of each file are placed into the cache of each user using a predetermined strategy; and the content delivery phase, where the server, upon receiving the specific demands of all users, designs a strategy to broadcast the XOR multiplexing of those requested packets through a shared link. Since then, coded caching has been an active research area and many papers have been written in this aspect, see for example, \cite{ref2}, \cite{ref3}, \cite{ref4}, \cite{ref1}, \cite{ref6}, \cite{ref5}.

The core idea is to design an appropriate content placement strategy such that in the delivery phase various demands of all users can be satisfied with a limited number of multicast transmissions. Each user could recover his requested file with the help of the contents broadcasted in the delivery phase and the contents already stored in his local cache. Assume that we are given $K$ users, $N$ files of unit size (and thus the size of the whole database is $N$) and every user has a local cache of size $M$. In this paper we restrict to the case $N\ge K$. The total transmission amount in the delivery phase is called the {\it rate} of the scheme, denoted as $R$. In order to implement a caching scheme, each file is required to be split into a certain number of packets and we denote this number as $F$. Generally speaking, given $K$, $N$ and $M$, the two parameters $R$ and $F$ are the main evaluating indicators for a caching scheme. The rate $R$ indicates the efficiency of the scheme while $F$ indicates the complexity of the scheme. To evaluate a caching scheme, we assume that the ratio $M/N$ is fixed and analyze the behaviour of $R$ and $F$ as functions of $K$.


For a trivial uncoded caching scheme, each user stores a fraction $M/N$ of each file in his cache. The server broadcasts the left fraction $(1-\frac{M}{N})$ of the requested file to each user according to his request. So the transmission rate for the uncoded caching scheme is just $R_U=K\cdot(1-\frac{M}{N})$, where the first factor $K$ is the rate without caching and the second factor $(1-\frac{M}{N})$ is termed as the local caching gain \cite{AN}. $R_U$ grows linearly with $K$. To implement this scheme it is easy to see that splitting each file into $F_U=N$ packets is enough. $F_U$ is thus a constant independent of $K$.

By simultaneously designing good strategies in the content placement phase and the content delivery phase, the Ali-Niesen scheme could significantly reduce the rate to $R_{AN}=K\cdot(1-\frac{M}{N})\frac{1}{1+KM/N}$. As $K$ grows, the limit of $R_{AN}$ is $\frac{N-M}{M}$, a constant independent of $K$. The rate of the Ali-Niesen scheme has been shown to be optimal. However, to implement the Ali-Niesen scheme, each file has to be split into $F_{AN}$ packets, where $F_{AN}=\binom{K}{KM/N}$ grows exponentially with $K$. This may become infeasible when $K$ is relatively large.

In order to reduce the size of $F_{AN}$, Yan et al. \cite{yan2015placement} reformulate the CCC scheme as a placement delivery array design (or PDA design for simplicity) problem. A PDA shows what is cached for each user in the placement phase and what should be transmitted in the delivery phase.
The problem of constructing a CCC scheme turns into designing a proper PDA for some given parameters. From this perspective Yan et al. propose two kinds of caching schemes. Compared with the Ali-Niesen scheme, $F_{PDA}$ in the scheme of Yan et al. is significantly smaller than $F_{AN}$, while the rate $R_{PDA}$ only suffers from a slight sacrifice compared with $R_{AN}$. However, $F_{PDA}$ still grows exponentially with $K$.

From the schemes above we can see, intuitively, there is a tradeoff between the two parameters $F$ and $R$. The ultimate objective is to design Pareto-optimal CCC schemes with respect to these two related parameters. In particular, if we want to construct a CCC scheme with constant rate $R$, then the Ali-Niesen scheme and the scheme of Yan et al. show that such schemes exist when $F$ grows exponentially with $K$. A natural question is to consider what may be the smallest possible $F$ such that a scheme with constant rate exists. This is the main motivation of the current paper.

In this paper, we follow the steps of \cite{yan2015placement}. We find that the concept of PDA has a natural correspondence with an important problem in extremal combinatorics. We will show that a PDA exists if and only if a corresponding linear and (6, 3)-free 3-uniform 3-partite hypergraph exists. From this point of view, it is very intuitive and easy to understand the essence on how to construct a CCC scheme. By using the well-known (6, 3)-theorem in extremal combinatorics, we first illustrate that constant rate CCC schemes with $F$ growing linearly with $K$ do not exist. Then we present two infinite classes of constructions (or hypergraphs satisfying the corresponding constraints), one from the union of the disjoint subsets including the Ali-Niesen scheme \cite{AN} as a special case, and the other from the extended $q$-ary sequences including the scheme of Yan et al. as a special case (with only a slight negligible difference). By analyzing our schemes, we show that constant rate CCC schemes with $F$ growing sub-exponentially with $K$ do exist.

The rest of this paper is organized as follows. In Section II we briefly review the CCC scheme built in \cite{AN} and the PDA design introduced in \cite{yan2015placement}. In Section III we present our hypergraph model for the CCC scheme, and then its equivalence with the PDA design is established. We will apply the famous (6, 3)-theorem in extremal combinatorics to show that constant rate CCC schemes with $F$ growing linearly with $K$ do not exist. In Section IV we introduce our first general hypergraph construction which includes the Ali-Niesen scheme as a special case. In Section V we introduce our second general hypergraph construction which includes the scheme of Yan et al. as a special case (with only a slight negligible difference). Our two schemes show that constant rate CCC schemes with $F$ growing sub-exponentially with $K$ do exist. We compare the performances of some existing CCC schemes in Section VI. Section VII consists of some concluding remarks and two more possible approaches to study the coded caching problem.

\section{The CCC scheme and the PDA design}

We first recall the CCC scheme introduced in \cite{AN}. Consider a caching system with one server connected to $K$ users, denoted as $\ma{K}=\{1,\dots,K\}$, through an error-free shared link. $N$ files ($N\ge K$) denoted as $\{W_1,W_2,\dots,W_N\}$ are stored in the server and assume that every file is of unit size. Each user has a cache of the same size of $M$ units for $0\le M\le N$.
As in many previous papers, the CCC scheme can be illustrated as the following picture.
\begin{center}
\scalebox{0.7}[0.7]{\includegraphics{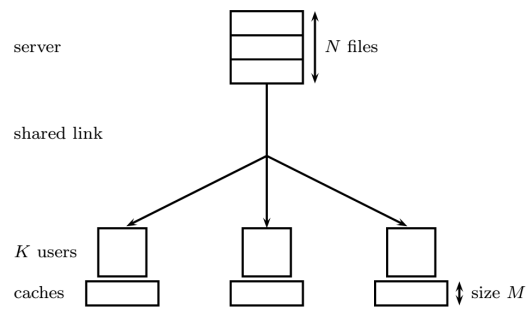}}
\end{center}
\noindent We call this caching scheme a $(K,M,N)$ caching system.

We now briefly review the PDA design problem proposed in \cite{yan2015placement}, which characterizes the CCC scheme in a single array.
A PDA is an array of size $F\times K$, denoted as $\ma{P}=[p_{j,k}]_{F\times K}$, where $F$ is a given integer such that $FM/N$ is an integer. The array consists of a specific symbol $*$ and a set of $S$ integers $\ma{S}=\{1,2,\dots,S\}$. We assume each integer $s\in\ma{S}$ appears at least once in the array. We also denote $\ma{F}=\{1,\ldots,F\}$ and $\ma{N}=\{1,\ldots,N\}$ for simplicity. The following constraints are required:

\begin{enumerate}

\item[C1.] The symbol $*$ appears $Z=FM/N$ times in each column. Therefore each column has $F-Z$ integer entries.

\item[C2.] In each row or each column there do not exist identical integers.

\item[C3.] For any two distinct entries $p_{j_1,k_1}=p_{j_2,k_2}=s\in\ma{S}$, $j_1\neq j_2$ and $k_1\neq k_2$, we have $p_{j_1,k_2}=p_{j_2,k_1}=*$.

\end{enumerate}

We call an array satisfying the constraints above as a $(K,F,Z,S)$-PDA. If each integer in $\ma{S}$ appears $g$ times in $\ma{P}$, we further call this array $g$-regular, which is denoted as a $g$-$(K,F,Z,S)$ PDA. As mentioned earlier, the coded caching scheme has two phases, i.e., the placement phase and the delivery phase. Given a PDA, the corresponding strategy for a caching scheme is as follows.

\begin{enumerate}
  \item Placement phase: Split each file into $F$ packets, that is, $W_i=\{W_{i,j}:j\in\ma{F}\}$. Each user $k\in\ma{K}$ receives the following packets in his cache: $$Z_k=\{W_{i,j}:p_{j,k}=*,i\in\ma{N}\}.$$
        It can be checked that each user has a cache of size $Z\cdot\frac{1}{F}\cdot N=M$.
  \item Delivery phase: Once the server receives the request $d=(d_1,\dots,d_K)$, where $d_k\in\ma{N}$ is the index of the requested file of the user $k$, the server broadcasts the following XOR multiplexing of packets at the time slot $s$ for each $s\in\ma{S}$:
$$ \bigoplus_{p_{j,k}=s, j\in\ma{F}, k\in\ma{K}} W_{d_k,j}. $$
\end{enumerate}

The decoding algorithm for each user is as follows. For a user $k\in\ma{K}$ requesting a certain file $W_{d_k}$ with $d_k\in\ma{N}$, he already has known $\{W_{d_k,j}:p_{j,k}=*\}$ due to the placement phase. To recover $W_{d_k}$, it suffices to decode the unknown packets $\{W_{d_k,j}:p_{j,k}\in\ma{S}\}$. Note that for each $s\in\ma{S}$, in the broadcasted message $\bigoplus_{p_{j,k}=s, j\in\ma{F}, k\in\ma{K}} W_{d_k,j}$, by the constraint C3 the user $k$ knows all the packets $\{W_{d_{k'},j}:p_{j,k'}=s,~k'\neq k\}$ in his cache at the placement phase. Then the unknown value $W_{d_k,j},~p_{j,k}=s$ can be easily computed by substracting $\bigoplus_{p_{j,k'}=s, j\in\ma{F}, k'\in\ma{K},k'\neq k} W_{d_{k'},j}$ from $\bigoplus_{p_{j,k}=s, j\in\ma{F}, k\in\ma{K}} W_{d_k,j}$.
Thus by simple calculation each user will recover his requested file. Therefore the caching scheme works.

\begin{example}\label{example}
   As an example, we present a (2,4,2,2)-PDA for a (2,1,2) CCC scheme. It is not hard to check that the following array is a (2,4,2,2)-PDA.
   \begin{center}
   $\ma{P}_{4,2}=
   \left(
     \begin{array}{cc}
       * & 1 \\
       1 & * \\
       * & 2 \\
       2 & * \\
     \end{array}
   \right)
   $
   \end{center}

   Assume that we are given two files $W_1$ and $W_2$. Divide each file into $F=4$ packets such that $W_1=\{W_{1,1},W_{1,2},W_{1,3},W_{1,4}\}$ and $W_2=\{W_{2,1},W_{2,2},W_{2,3},W_{2,4}\}$. Let $Z_1$ and $Z_2$ be the caches for the two users. In the placement phase, the first user stores $Z_1=\{W_{1,1},W_{1,3},W_{2,1},W_{2,3}\}$ and the second user stores $Z_2=\{W_{1,2},W_{1,4},W_{2,2},W_{2,4}\}$. According to the PDA described above, the contents broadcasted by the server in the delivery phase are illustrated in Table I.

   \begin{table}[h]
    \centering
    \begin{tabular}{|c|c|c|}
      \hline
      Request $d$ & Time slot 1 & Time slot 2 \\\hline
      (1,1) & $W_{1,2}\bigoplus W_{1,1}$ & $W_{1,4}\bigoplus W_{1,3}$ \\\hline
      (1,2) & $W_{1,2}\bigoplus W_{2,1}$ & $W_{1,4}\bigoplus W_{2,3}$ \\\hline
      (2,1) & $W_{2,2}\bigoplus W_{1,1}$ & $W_{2,4}\bigoplus W_{1,3}$ \\\hline
      (2,2) & $W_{2,2}\bigoplus W_{2,1}$ & $W_{2,4}\bigoplus W_{2,3}$ \\
      \hline
    \end{tabular}
        \caption{Delivery phase in Example \ref{example}}
    \end{table}

   To explain the decoding algorithm, suppose that $d=(1,2)$. The first user can recover $W_1$ by decoding $W_{1,2}$ and $W_{1,4}$. Note that $W_{1,2}$ and $W_{1,4}$ can be computed by subtracting $W_{2,1}$ from $W_{1,2}\bigoplus W_{2,1}$
   and $W_{2,3}$ from $W_{1,4}\bigoplus W_{2,3}$, respectively. The second user can recover $W_2$ by decoding $W_{2,1}$ and $W_{2,3}$. Note that $W_{2,1}$ and $W_{2,3}$ can be computed by subtracting $W_{1,2}$ from $W_{1,2}\bigoplus W_{2,1}$
   and $W_{1,4}$ from $W_{1,4}\bigoplus W_{2,3}$, respectively. The decoding algorithms for other requests are similar.
\end{example}

Whatever the request is, the caching scheme represented by the PDA will broadcast $S$ packets, where each packet is of size $1/F$. Thus the rate of this scheme is $R=S/F$. Moreover, if each file is divided into $F$ packets in the placement phase, then such a scheme is termed as an $F$-division scheme. In \cite{yan2015placement}, the authors proved the following fundamental theorem for the PDA design.

\begin{theorem}[\cite{yan2015placement}]\label{PDA}

    An $F$-division caching scheme for a $(K,M,N)$ caching system can be realized by a $(K,F,Z,S)$-PDA $\ma{P}=[p_{j,k}]_{F\times K}$ with $Z/F=M/N$. Each user can decode his requested file correctly for any request $d$ at the rate $R=S/F$.
\end{theorem}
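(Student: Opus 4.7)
The plan is to verify the construction already outlined in the prose preceding the theorem, checking each of the three claims in turn: (i) the cache size equals $M$, (ii) the decoding succeeds for every request $d$, and (iii) the rate equals $S/F$. No new machinery is needed; the whole argument is a bookkeeping exercise that extracts consequences from constraints C1--C3.

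First I would verify the cache size. Each user $k$ stores $Z_k=\{W_{i,j}:p_{j,k}=*,\,i\in\ma{N}\}$. By C1, the symbol $*$ appears exactly $Z$ times in column $k$, so $|Z_k|=ZN$ packets, each of size $1/F$, giving total cache usage $ZN/F=M$ as required. The rate claim is equally immediate: there is one broadcast per symbol $s\in\ma{S}$, each broadcast is a single XOR of packets of size $1/F$, so the total transmission amount is $S/F$.

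The substantive step is decoding. Fix a user $k$ and a request $d$. The packets user $k$ needs are $\{W_{d_k,j}:p_{j,k}\in\ma{S}\}$. For each such $j$, set $s=p_{j,k}$ and look at the broadcast at time slot $s$, namely $\bigoplus_{p_{j',k'}=s}W_{d_{k'},j'}$. By C2 the entry $p_{j,k}=s$ is the unique occurrence of $s$ in column $k$, so $W_{d_k,j}$ appears in this XOR exactly once; every other term corresponds to some $(j',k')$ with $p_{j',k'}=s$ and $k'\neq k$. By C2 applied to row $j$, if $j'=j$ then $k'=k$, contradiction; hence $j'\neq j$. Then C3, applied to the two distinct entries $p_{j,k}=p_{j',k'}=s$ with $j\neq j'$ and $k\neq k'$, forces $p_{j',k}=*$. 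Thus $W_{d_{k'},j'}\in Z_k$, so user $k$ knows every summand in the XOR except $W_{d_k,j}$ and can recover it by cancellation.

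The main (and really only) subtlety is that last implication: constraint C3 is precisely what is needed to guarantee that the ``interfering'' summands in each broadcast have already been placed in the interested user's cache, and one has to be careful to rule out the degenerate case $j'=j$ before invoking C3, which is handled by C2. Everything else is a direct count. Since the argument works uniformly for every $k$ and every request $d\in\ma{N}^K$, the theorem follows.
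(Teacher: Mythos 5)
Your proof is correct and follows essentially the same route as the paper, which presents this argument informally in the prose describing the placement/delivery phases and the decoding algorithm just before the theorem (the theorem itself is cited from Yan et al.\ without a separate formal proof). Your only addition is making explicit the use of C2 to rule out the degenerate case $j'=j$ before invoking C3, which the paper leaves implicit.
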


\begin{remark}\label{euqiofAN} The Ali-Niesen scheme in \rm{\cite{AN}} is equivalent to a $(K,\binom{K}{t},\binom{K-1}{t-1},\binom{K}{t+1})$-PDA, where $t=KM/N$ is an integer.
\end{remark}

\begin{remark}\label{equivtang}The first scheme of Yan et al. introduced in \rm{\cite{yan2015placement}} is a $(q(m+1),q^m,q^{m-1},q^{m+1}-q^m)$-PDA.
\end{remark}

Summing up the above, the problem of designing a CCC scheme turns into designing a PDA under given parameters.
Throughout the paper we consider caching schemes where the ratio $M/N$ is fixed, $N\ge K$ and $K$ goes into infinity. We analyze a caching scheme by focusing on the behaviour of $F$ and $R$ with respect to $K$.

\section{The hypergraph model} \label{hypergraph}

Now we turn to a hypergraph perspective towards the CCC scheme or the PDA design problem. We first introduce some necessary definitions. When speaking about a hypergraph we mean a pair $\ma{G}=(V(\ma{G}),E(\ma{G}))$, where the edge set $E(\ma{G})$ is identified as a collection of subsets of the vertex set $V(\ma{G})$. $\ma{G}$ is said to be linear if for all distinct $A,B\in E(\ma{G})$ it holds that $|A \cap B|\le1$. We say $\ma{G}$ is $r$-uniform if $|A|=r$ for all $A\in E(\ma{G})$.

An $r$-uniform hypergraph $\ma{G}$ is $r$-partite if its vertex set $V(\ma{G})$ can be colored in $r$ colors in such a way that no edge of $\ma{G}$ contains two vertices of the same color. In such a coloring, the color classes of $V(\ma{G})$, i.e., the sets of all vertices of the same color, are called parts of $\ma{G}$. In this paper we mainly concern $3$-uniform $3$-partite hypergraphs with three parts $\ma{F}$, $\ma{K}$, $\ma{S}$ such that $|\ma{F}|=F$, $|\ma{K}|=K$ and $|\ma{S}|=S$.

Brown, Erd{\H{o}}s and S{\'o}s \cite{bes1,bes2} introduce the function $f_r(n,v,e)$ to denote the maximum number of edges in an $r$-uniform hypergraph on $n$ vertices which does not contain $e$ edges spanned by $v$ vertices. In other words, in such hypergraphs the size of the union of arbitrary $e$ edges is at least $v+1$. These hypergraphs are called $G(v,e)$-free (or simply $(v,e)$-free). The famous (6, 3)-theorem of Ruzsa and Szemer{\'e}di \cite{RS} points out that

\begin{lemma}\label{6,3}
    \begin{equation*}
            n^{2-o(1)}<f_3(n,6,3)=o(n^2).
    \end{equation*}
\end{lemma}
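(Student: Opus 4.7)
The plan is to prove the two halves separately by the classical arguments of Ruzsa and Szemer\'edi; the upper bound rests on the triangle removal lemma (a consequence of Szemer\'edi's regularity lemma) while the lower bound uses Behrend's dense $3$-AP-free set.

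For the upper bound $f_3(n,6,3)=o(n^2)$, I would reduce to counting triangles in a tripartite graph. Given a $(6,3)$-free $3$-uniform hypergraph $\ma{H}$ on $n$ vertices, build a graph $G$ on the disjoint union $V_1\cup V_2\cup V_3$ of three copies of $V(\ma{H})$: for each hyperedge $\{a,b,c\}\in E(\ma{H})$ and each of the six ways of assigning $\{a,b,c\}$ to $(V_1,V_2,V_3)$, put the corresponding triangle into $G$. Each hyperedge then contributes six edge-disjoint triangles, and the total number of triangles of this form is $6|E(\ma{H})|$. The key structural claim is that the $(6,3)$-free hypothesis on $\ma{H}$ forces \emph{every} triangle of $G$ to arise this way: any additional triangle $\{u_1,v_2,w_3\}$ would be witnessed by three hyperedges through the pairs $\{u,v\},\{v,w\},\{u,w\}$, and these three hyperedges would span at most $6$ vertices, contradicting $(6,3)$-freeness. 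If $|E(\ma{H})|\ge\varepsilon n^2$ then $G$ contains $\Omega(n^2)$ pairwise edge-disjoint triangles, so by the triangle removal lemma $G$ must contain $\omega(n^2)$ triangles in total; but by the above each triangle comes from some hyperedge, giving only $O(n^2)$ triangles, a contradiction.

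For the lower bound $n^{2-o(1)}<f_3(n,6,3)$, I would use Behrend's theorem, which supplies a set $B\subseteq\{1,\ldots,m\}$ with $|B|\ge m\exp(-c\sqrt{\log m})$ containing no nontrivial $3$-term arithmetic progression. On the vertex set $X\cup Y\cup Z$, where $X,Y,Z$ are disjoint copies of suitable integer intervals of total size $n=O(m)$, take the hyperedges $\{x,\,x+b,\,x+2b\}$ for all $x\in X$ and all $b\in B$. This gives $m|B|=n^{2-o(1)}$ hyperedges. The $(6,3)$-freeness then reduces to a short case analysis: three distinct hyperedges $\{x_i,x_i+b_i,x_i+2b_i\}$ (for $i=1,2,3$) spanning at most $6$ vertices must share many coordinates, and chasing the equalities through $X$, $Y$ and $Z$ forces an equation of the shape $b_i+b_k=2b_j$ with $\{b_1,b_2,b_3\}$ not all equal, i.e.\ a nontrivial $3$-AP inside $B$, contradicting the choice of $B$.

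The main obstacle in both directions is the structural verification that $(6,3)$-freeness of $\ma{H}$ matches exactly the combinatorial property being exploited: in the upper bound, that no ``spurious'' triangle appears in $G$ outside those coming from single hyperedges; in the lower bound, that any triple of hyperedges on at most $6$ vertices encodes a $3$-AP in $B$. Both are elementary but require careful case analysis distinguishing how many vertices the three edges share and in which parts they lie. Once these two claims are in place, the triangle removal lemma and Behrend's bound respectively yield the stated $o(n^2)$ and $n^{2-o(1)}$ estimates.
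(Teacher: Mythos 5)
The paper does not prove this lemma at all: it is quoted as the known $(6,3)$-theorem of Ruzsa and Szemer\'edi, with a citation to [RS] in place of a proof. Your sketch is a correct outline of the standard argument for that theorem (triangle removal lemma for the upper bound, Behrend's $3$-AP-free set for the lower bound), so there is nothing in the paper to compare it against; I will only flag the one step you pass over too quickly. In the upper bound, the edge-disjointness of the $6|E(\ma{H})|$ triangles \emph{across different hyperedges}, and the final bound ``only $O(n^2)$ triangles,'' both require $\ma{H}$ to be linear: if two hyperedges shared two vertices, the corresponding triangles would share an edge, and without linearity $|E(\ma{H})|$ is only trivially $O(n^3)$. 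This is not automatic from $(6,3)$-freeness, but it is repaired by a short observation: if $e_1,e_2$ share two vertices, then any third edge meeting $e_1\cup e_2$ would give three edges on at most six vertices, so such pairs form isolated components of two edges on four vertices, contributing only $O(n)$ edges, which may be discarded. (In the context of this paper the issue does not even arise, since the hypergraphs produced from PDAs are linear by construction.) With that reduction inserted, both halves of your argument go through; the lower-bound case analysis you describe does indeed terminate in an equation $b_1+b_3=2b_2$, which Behrend's set forbids unless all three edges coincide.
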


\noindent This lemma indicates that if a 3-uniform hypergraph on $n$ vertices is (6, 3)-free, then the magnitude of the number of edges can not be linear with $n^2$.

Recall the definition of $\ma{F}$, $\ma{K}$, $\ma{S}$ introduced in Section II. The following observation is the starting point of our approach. A PDA is actually an $F\times K$ array $\ma{P}$ whose entry locates in an alphabet of size $S+1$ (the ``plus one" corresponds to the symbol *). Let us pick a linear 3-uniform 3-partite hypergraph $\ma{H}$ with three parts $\ma{F}$, $\ma{K}$, $\ma{S}$ such that $|\ma{F}|=F$, $|\ma{K}|=K$ and $|\ma{S}|=S$. We connect an edge $\{j,k,s\}$ for $j\in\ma{F},~k\in\ma{K},~s\in\ma{S}$ if and only if the entry in the $j$-th row and the $k$-th column of $\ma{P}$ is exactly $s\in\ma{S}$. Then this hypergraph $\ma{H}$ is uniquely determined by the array $\ma{P}$ and vice versa (in the opposite direction, suppose that we are given a linear 3-uniform 3-partite hypergraph $\ma{H}$ with parts $\ma{F}$, $\ma{K}$, $\ma{S}$, then we can construct a corresponding $F\times K$ array $\ma{P}$ with entries belonging to $\ma{S}\cup\{*\}$, see Theorem \ref{equivalence} below for the details). Then $\ma{H}$ is called the hypergraph defined by the PDA $\ma{P}$. It is easy to check that the number of edges in the hypergraph equals the number of integer entries in the PDA. One can compute that $|E(\ma{H})|=K(F-Z)=KF(1-\frac{M}{N})$.

An important reason why we use the hypergraph perspective is that the three constraints for the PDA design can be easily translated into the corresponding constraints for the hypergraph. The following theorem establishes the equivalence between the PDAs and a class of (6, 3)-free hypergraphs.

\begin{theorem}\label{equivalence}
    A $(K,F,Z,S)$-PDA satisfying constraints C1, C2 and C3 exists if and only if the hypergraph defined by the PDA is a linear and
    (6, 3)-free 3-uniform 3-partite hypergraph with three parts $\ma{F}$, $\ma{K}$, $\ma{S}$ such that $|\ma{F}|=F$, $|\ma{K}|=K$ and $|\ma{S}|=S$. Furthermore, each vertex $k\in\ma{K}$ is incident with exactly $F-Z$ edges.
\end{theorem}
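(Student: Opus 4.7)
The plan is to establish a bijective correspondence between $(K,F,Z,S)$-PDAs and the described class of hypergraphs via the map indicated in the paragraph preceding the theorem: given a PDA $\ma{P}$, define an edge set on the parts $\ma{F},\ma{K},\ma{S}$ by declaring $\{j,k,s\}$ to be an edge iff $p_{j,k}=s$; conversely, given a hypergraph $\ma{H}$, set $p_{j,k}=s$ if $\{j,k,s\}\in E(\ma{H})$ and $p_{j,k}=*$ otherwise. The bulk of the proof then amounts to checking that the three PDA constraints C1, C2, C3 translate exactly into linearity, $(6,3)$-freeness, and the degree condition.

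First I would handle the forward direction. The $3$-uniform and $3$-partite nature of $\ma{H}$ is built into the construction. For linearity, two distinct edges can share at most one vertex: sharing both a row index and a column index would force the same cell of $\ma{P}$ to hold two distinct symbols; sharing a row (resp. column) index together with a symbol is exactly what C2 forbids. Hence linearity follows from the single-valuedness of the array plus C2. The degree condition is immediate from C1, since each column $k$ contains exactly $F-Z$ entries from $\ma{S}$, each of which contributes a unique edge incident with $k$.

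The interesting step is reading off $(6,3)$-freeness from C3. I would argue that in any linear $3$-uniform $3$-partite hypergraph, three edges spanning only six vertices must consist of exactly two vertices from each part, and the only configuration compatible with linearity (after relabelling) is $\{j_1,k_1,s\},\{j_2,k_2,s\},\{j_1,k_2,s'\}$. This is because three edges contribute nine incidences to a ground set of size six, so the pairwise intersections sum to three; linearity forces each pair to intersect in exactly one vertex, and these three shared vertices must lie in three distinct parts (else two vertices of one part would be common to two edges, violating linearity). Translating back to $\ma{P}$, such a triple means $p_{j_1,k_1}=p_{j_2,k_2}=s$ with $j_1\neq j_2,\,k_1\neq k_2$ while $p_{j_1,k_2}=s'\in\ma{S}$, which is precisely what C3 prohibits. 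Running the same argument in reverse shows that $(6,3)$-freeness, under the already established linearity, is equivalent to C3.

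For the reverse direction, I would start from a hypergraph $\ma{H}$ with the stated properties and read off $\ma{P}$ cell by cell. Well-definedness of $p_{j,k}$ follows from linearity applied to the pair $\{j,k\}$. Constraint C1 is the degree condition on the part $\ma{K}$, and C2 and C3 follow by inverting the arguments above: a repeated symbol in a row or column produces two edges sharing two vertices (contradicting linearity), and a violation of C3 produces three edges on six vertices (contradicting $(6,3)$-freeness). The main obstacle, and the only non-bookkeeping step, is the enumeration of $(6,3)$-configurations in the $3$-partite setting carried out above; once that structural lemma is in hand, everything else is a direct dictionary between array language and hypergraph language.
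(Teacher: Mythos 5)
Your proof is correct and follows essentially the same route as the paper: both directions reduce to the dictionary between the array constraints and linearity, the degree condition, and $(6,3)$-freeness, with the only substantive step being the classification of three edges spanning six vertices. The sole difference is cosmetic: where the paper rules out the $4/1/1$ and $3/2/1$ vertex distributions by a short case analysis before settling on the $2/2/2$ configuration, you reach the same canonical configuration $\{j_1,k_1,s\},\{j_2,k_2,s\},\{j_1,k_2,s'\}$ by counting incidences and pairwise intersections, an equally valid (and arguably cleaner) way to carry out the same step.
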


\begin{proof}
    Let $\ma{H}$ denote the hypergraph defined by the $(K,F,Z,S)$-PDA $\ma{P}$. On one hand, to prove the ``only if" part, it suffices to verify that $\ma{H}$ satisfies the constraints proposed in the theorem.
    \begin{enumerate}
      \item It is easy to check that $\ma{H}$ is a 3-uniform 3-partite hypergraph with three parts $\ma{F}$, $\ma{K}$, $\ma{S}$.
      \item The linearity of $\ma{H}$ can be derived from the constraint C2. Let us check it case by case. First, we do not have two edges of the form $\{j,k,s\}$ and $\{j,k,s'\}$ since $p_{j,k}$ is well-defined and has a unique value. Second, we do not have two edges of the form $\{j,k,s\}$ and $\{j,k',s\}$ since otherwise $p_{j,k}=p_{j,k'}=s$, which is forbidden by C2. Finally, we do not have two edges of the form $\{j,k,s\}$ and $\{j',k,s\}$ since otherwise $p_{j,k}=p_{j',k}=s$, which is also forbidden by C2.
      \item Each vertex $k\in\ma{K}$ is incident with exactly $F-Z$ edges, since by C1 each column of $\ma{P}$ contains precisely $F-Z$ integers, which induce $F-Z$ edges in $\ma{H}$.
      \item $\ma{H}$ is (6, 3)-free. In other words, the union of arbitrary three edges of $\ma{H}$ contains at least seven vertices. Assume otherwise, suppose there are three edges spanned by at most six vertices. If the number of vertices is at most five, then one can easily deduce that there must exist two edges having two common vertices, violating the linearity of the hypergraph. It remains to consider the case that three edges are spanned by exactly six vertices.

          Consider how the six vertices are chosen from the three parts. We say that these vertices are divided into the form $a/b/c$ if we choose $a$ vertices from the first part, $b$ from the second and the rest $c$ from the remaining part. If they are divided into $4/1/1$ or $3/2/1$ (permutations included), then the contradiction can also be deduced from the linearity of the hypergraph.
          For the case $2/2/2$, we can always denote these vertices as $j_1,j_2,k_1,k_2,s_1,s_2$. Suppose we have three edges, then without loss of generality assume that $s_1$ appears in two edges, say $\{j_1,k_1,s_1\}$ and $\{j_2,k_2,s_1\}$. Then the possible candidate for the third edge is either $\{j_1,k_2,s_2\}$ or $\{j_2,k_1,s_2\}$. However, from C3, $p_{j_1,k_1}=p_{j_2,k_2}=s_1$ implies that $p_{j_1,k_2}=p_{j_2,k_1}=*$. So there are no edges of the form $\{j_1,k_2,s_2\}$ or $\{j_2,k_1,s_2\}$ for any $s_2\in\ma{S}$. Thus we do not have three edges which are spanned by six vertices.
    \end{enumerate}

    On the other hand, to prove the ``if" part, the following observation is crucial. If we are given a linear and (6, 3)-free 3-uniform 3-partite hypergraph $\ma{H}$ with parts $\ma{F}$, $\ma{K}$, $\ma{S}$, then we can construct a corresponding $F\times K$ array $\ma{P}$ whose entry belongs to $\ma{S}\cup\{*\}$.
    The value in the $j$-th row and the $k$-th column of $\ma{P}$ is $s\in\ma{S}$ if $\{j,k,s\}$ forms an edge of $\ma{H}$ and * otherwise. One can see that we do not have two edges of the form $\{j,k,s\}$ and $\{j,k,s'\}$ by the linearity constraint, thus $p_{j,k}$ is well-defined. It is routine to verify that $\ma{P}$ satisfies C1 and C2. And C3 can be verified by contradiction.
\end{proof}

To see the power and the clarity of the hypergraph perspective towards the PDA design problem, we first prove the following theorem, which is an immediate consequence of Lemma \ref{6,3} and Theorem \ref{equivalence}.

\begin{theorem}\label{linearpda}
    If $R=S/F$ and $M/N$ are both given constants independent of $K$, then for sufficiently large $K$, a $(K,F,Z,S)$-PDA where $F$ grows linearly with $K$ does not exist.
\end{theorem}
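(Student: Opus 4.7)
The plan is to translate the existence question for PDAs into a density statement about $3$-uniform hypergraphs and then contradict the $(6,3)$-theorem (Lemma~\ref{6,3}). By Theorem~\ref{equivalence}, any $(K,F,Z,S)$-PDA gives rise to a linear, $(6,3)$-free $3$-uniform $3$-partite hypergraph $\ma{H}$ with parts of sizes $F$, $K$, $S$ and with precisely $K(F-Z) = KF(1-M/N)$ edges. So I would argue by contradiction: suppose such a PDA exists for arbitrarily large $K$, with $F$ linear in $K$ and with $R = S/F$ and $M/N$ fixed constants.

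First, I would count vertices. Since $F = \Theta(K)$ by hypothesis and $S = RF = \Theta(K)$ because $R$ is constant, the total vertex count is
\[
    n := |V(\ma{H})| = F + K + S = \Theta(K).
\]
Next, I would count edges. As $M/N$ is a fixed constant strictly less than $1$ (the degenerate case $M = N$ yields no edges at all and may be discarded), one has
\[
    |E(\ma{H})| = KF(1 - M/N) = \Theta(K^2) = \Theta(n^2).
\]
Hence there exists a positive constant $\alpha$, depending only on $R$, $M/N$, and the implied constant in $F = \Theta(K)$, such that $|E(\ma{H})| \ge \alpha\, n^2$ for all sufficiently large $K$.

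Finally, I would invoke Lemma~\ref{6,3}, which asserts that any $(6,3)$-free $3$-uniform hypergraph on $n$ vertices has at most $f_3(n,6,3) = o(n^2)$ edges. Applying this to $\ma{H}$ yields $\alpha n^2 \le |E(\ma{H})| = o(n^2)$, which is impossible once $n$, and therefore $K$, is large enough. This contradiction completes the argument.

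The heavy lifting is already done by Theorem~\ref{equivalence} and Lemma~\ref{6,3}, so the main obstacle is not any single technical step but careful bookkeeping of constants and asymptotic regimes: one must confirm $F = \Omega(K)$, not merely $F = O(K)$, so that the edge count is genuinely $\Omega(n^2)$ rather than merely $O(n^2)$. A minor subtlety is that Lemma~\ref{6,3} is stated for general $3$-uniform hypergraphs and therefore applies immediately to our $3$-partite hypergraph $\ma{H}$, so no transfer argument between the partite and non-partite settings is needed.
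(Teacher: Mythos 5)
Your proposal is correct and follows essentially the same route as the paper: pass to the hypergraph via Theorem~\ref{equivalence}, observe that $|V(\ma{H})|=F+K+S=\Theta(K)$ and $|E(\ma{H})|=KF(1-M/N)=\Theta(K^2)$, and contradict Lemma~\ref{6,3}. Your extra bookkeeping (discarding the degenerate case $M=N$ and noting that $F=\Omega(K)$ is what makes the edge count $\Omega(n^2)$) is a small but harmless refinement of the paper's argument.
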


\begin{proof}
For sufficiently large $K$, assume that a $(K,F,Z,S)$-PDA with $F=\Theta(K)$ does exist for some $K,~F,~Z,~S$ satisfying the conditions of this theorem. Recall that $Z/F=M/N$. Consider the hypergraph $\ma{H}$ defined by such a PDA, it holds that $|V(\ma{H})|=|\ma{F}|+|\ma{K}|+|\ma{S}|=F+K+S=\Theta(K)+K+RF=\Theta(K)$ and $|E(\ma{H})|=K(F-Z)=KF(1-M/N)=\Theta(K^2)=\Theta(|V(\ma{H})|^2)$.

On the other hand, by Theorem \ref{equivalence} we know $\ma{H}$ is (6, 3)-free and hence by Lemma \ref{6,3} we have $|E(\ma{H})|=o(|V(\ma{H})|^2)$, a contradiction.
\end{proof}

As we have mentioned in the introductory section, the motivation of this paper is to consider what may be the smallest possible $F$ such that a scheme with constant rate $R$ exists. Theorem \ref{linearpda} actually offers a lower threshold, that is, constant rate schemes with $F$ growing linearly with $K$ do not exist. The schemes of Ali-Niesen and Yan et al. indicate that $F$ growing exponentially with $K$ suffices. Could we lower the magnitude of $F$?  As a first step, we will show that constant rate schemes with $F$ growing sub-exponentially with $K$ do exist, by offering two such schemes in the next two sections. The schemes are introduced directly in the hypergraph perspective, that is, we are actually constructing linear and (6, 3)-free 3-uniform 3-partite hypergraphs.

\section{Constructions from the union of disjoint subsets} \label{scheme1}

In this section we present our first scheme, including the Ali-Niesen scheme as a special case.

\fbox{%
  \parbox{\textwidth}{%
Scheme 1:

Let $n,a,b$ be positive integers satisfying $n\ge a+b$. Let $[n]=\{1,2,\dots,n\}$ and let $\binom{[n]}{a}=\{A\s[n]:|A|=a\}$ denote the collection of subsets of $[n]$ of size $a$. Then we construct a 3-uniform 3-partite hypergraph $\ma{H}_1$ as follows. Let $V_1,~V_2,~V_3$ be three parts of $V(\ma{H}_1)$ such that $V_1=\binom{[n]}{a}$, $V_2=\binom{[n]}{b}$ and $V_3=\binom{[n]}{a+b}$.
Three vertices $A\in V_1,~B\in V_2,~C\in V_3$ form an edge $\{A,B,C\}$ if and only if $|A|=a$, $|B|=b$, $|C|=c$ and $A\cup B=C$.
  }%
}

\begin{theorem}\label{scheme1the}
$\ma{H}_1$ is a linear and (6, 3)-free 3-uniform 3-partite hypergraph.
\end{theorem}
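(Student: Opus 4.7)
The plan is to verify linearity and the (6,3)-free property of $\mathcal{H}_1$ separately. Both arguments rest on the observation that an edge $\{A,B,C\}$ encodes the relation $A\cup B=C$ with $|A|=a$, $|B|=b$, $|C|=a+b$, which already forces $A\cap B=\emptyset$ and, more importantly, makes any two of $A,B,C$ recover the third by complementation inside $C$.

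For linearity, I would suppose two distinct edges share two vertices. Since $\mathcal{H}_1$ is 3-partite, the shared vertices must come from two different parts, leaving three cases ($A$-$B$, $A$-$C$, or $B$-$C$ shared). In each case one of the reconstruction formulas $C=A\cup B$, $B=C\setminus A$, $A=C\setminus B$ pins down the remaining vertex uniquely, so the two edges coincide.

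For the (6,3)-free property, I would first note that linearity alone forces any three edges to span at least $3+2+1=6$ vertices, so it suffices to rule out the possibility of exactly 6 vertices. By 3-partiteness the 6 vertices distribute across the parts as $(v_1,v_2,v_3)$ with $\sum v_i=6$ and $1\le v_i\le 3$, leaving only the balanced distribution $(2,2,2)$ and the permutations of $(1,2,3)$. I would dispatch the unbalanced case quickly: if some $v_i=1$, then all three edges contain that unique vertex, and pigeonhole on the part of size $2$ produces two edges sharing a second vertex, contradicting linearity.

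The main obstacle is the balanced $(2,2,2)$ case, which is the only step that substantively uses the relation $A\cup B=C$. By pigeonhole inside $V_3$, two of the three edges pass through a common vertex $C_1$; linearity then forces them to be $(A_1,B_1,C_1)$ and $(A_2,B_2,C_1)$ with $A_1\ne A_2$ and $B_1\ne B_2$. The third edge lies on the remaining $V_3$-vertex $C_2$, and since $(A_i,B_j)=(A_1,B_1)$ or $(A_2,B_2)$ would force $C_2=C_1$, the third edge must be a ``cross'' triple, say $(A_1,B_2,C_2)$ up to symmetry. Combining $A_1\cup B_1 = A_2\cup B_2 = C_1$ with the disjointness $A_1\cap B_2=\emptyset$ extracted from the third edge, I would deduce $A_1\subseteq A_2\cup B_2$ and therefore $A_1\subseteq A_2$; since $|A_1|=|A_2|=a$, this yields $A_1=A_2$, contradicting the distinctness of the two $V_1$-vertices. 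The other cross configuration is handled identically by swapping the roles of indices $1$ and $2$.
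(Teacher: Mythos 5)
Your proof is correct and follows essentially the same route as the paper: linearity from the fact that any two of $A,B,C$ determine the third, reduction of the (6,3)-condition to the balanced $2/2/2$ case, and then the key observation that two edges through a common $C$ force $A_1\cup B_1=A_2\cup B_2=C$, which (your $A_1\subseteq A_2$ deduction is just the contrapositive of the paper's $A\cap B'\neq\emptyset$) rules out any cross edge. Your handling of the unbalanced distributions is slightly more self-contained than the paper's, which defers those cases to the proof of its Theorem 3.2, but the substance is identical.
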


\begin{proof}
It is routine to check that this hypergraph is 3-uniform and 3-partite. For an edge $\{A,B,C\}$, every two vertices uniquely determine the third one, so the linearity is straightforward. Assume that there are three edges spanned by six vertices. By the proof of Theorem \ref{equivalence}, we only need to consider the case where the six vertices are chosen averagely from the three vertex parts. For every six vertices $A,A',B,B',C,C'$, if they induce three edges, then without loss of generality we can assume that there exist two edges $\{A,B,C\}$ and $\{A',B',C\}$. However, $A\cup B = A'\cup B' =C$ and $A\neq A'$ indicates that $A\cap B'\neq\emptyset$ and $A'\cap B\neq\emptyset$, so $|A\cup B'|<a+b$ and $|A'\cup B|<a+b$. Thus we do not have edges of the form $\{A,B',C'\}$ or $\{A',B,C'\}$ for any $C'\in V_3$. Therefore we do not have three edges which are spanned by six vertices.\end{proof}

\begin{theorem}
    For every three positive integers $a,~b,~n$ such that $a+b\le n$, there exists an $\binom{a+b}{a}$-regular $(\binom{n}{b},\binom{n}{a},\binom{n}{a}-\binom{n-b}{a},\binom{n}{a+b})$-PDA.
\end{theorem}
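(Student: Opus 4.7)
The plan is to invoke Theorem \ref{equivalence}, which converts any linear and (6,3)-free 3-uniform 3-partite hypergraph into the corresponding PDA, with $F$, $K$, $S$ equal to the sizes of the three parts and each vertex of the $\ma{K}$-part having exactly $F-Z$ incident edges. Since Theorem \ref{scheme1the} has already verified that the hypergraph $\ma{H}_1$ of Scheme 1 is linear and (6,3)-free, the remaining work is simply to read off the parameters $K$, $F$, $Z$, $S$ and the regularity constant from the combinatorics of $\ma{H}_1$.

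The sizes of the three parts are immediate: $|V_1|=\binom{n}{a}$, $|V_2|=\binom{n}{b}$, $|V_3|=\binom{n}{a+b}$, which yield $F=\binom{n}{a}$, $K=\binom{n}{b}$ and $S=\binom{n}{a+b}$. To compute $Z$, I would fix any $B\in V_2$ and count the edges of $\ma{H}_1$ incident with $B$: such an edge is forced to be of the form $\{A,B,A\cup B\}$, and the requirement $|A\cup B|=a+b$ forces $A$ to be an $a$-subset of $[n]$ disjoint from $B$. There are exactly $\binom{n-b}{a}$ such $A$, so every vertex of $V_2$ has degree $\binom{n-b}{a}$. By Theorem \ref{equivalence} this quantity equals $F-Z$, giving $Z=\binom{n}{a}-\binom{n-b}{a}$, as required.

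It remains to verify regularity. I would fix any $C\in V_3$ and count the edges incident with $C$: such an edge is of the form $\{A,C\setminus A,C\}$ for some $a$-subset $A\s C$, and there are exactly $\binom{a+b}{a}$ such $A$. Hence every vertex of $V_3$ has degree $\binom{a+b}{a}$ in $\ma{H}_1$, which through the correspondence of Theorem \ref{equivalence} says that every symbol in $\ma{S}$ appears exactly $\binom{a+b}{a}$ times in the array. This is precisely the definition of $\binom{a+b}{a}$-regularity.

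I do not anticipate any real obstacle: the substantive work---the (6,3)-freeness, whose delicate $2/2/2$ case exploits the fact that $A\cup B=A'\cup B'$ with $A\neq A'$ forces both $A\cap B'$ and $A'\cap B$ to be nonempty---has already been discharged in Theorem \ref{scheme1the}. Given that lemma together with Theorem \ref{equivalence}, the present theorem reduces to the three elementary degree counts outlined above.
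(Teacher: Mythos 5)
Your proposal is correct and follows essentially the same route as the paper: both apply Theorems \ref{equivalence} and \ref{scheme1the}, read off $K$, $F$, $S$ as the part sizes, obtain $Z$ from the degree $\binom{n-b}{a}$ of each vertex in $V_2$, and establish $\binom{a+b}{a}$-regularity by counting the pairs $(A,B)$ with $A\cup B=C$ for a fixed $C\in V_3$. Your write-up simply makes the disjointness arguments behind these two counts more explicit than the paper does.
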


\begin{proof}
    Take $\ma{F}=V_1,~\ma{K}=V_2$ and $\ma{S}=V_3$. Then by the construction of Scheme 1 it is easy to see that every vertex in $\ma{K}$ is incident with exactly $\binom{n-b}{a}$ edges. Therefore, by Theorems \ref{equivalence} and \ref{scheme1the} we can conclude that there exists a $(K,F,Z,S)$-PDA with $K=\binom{n}{b},~F=\binom{n}{a},~Z=\binom{n}{a}-\binom{n-b}{a}$ and $S=\binom{n}{a+b}$. Furthermore, it is $\binom{a+b}{a}$-regular since for any $C\in\binom{[n]}{a+b}$, it holds that $|\{(A,B):A\in\binom{[n]}{a},~B\in\binom{[n]}{b},~A\cup B=C\}|=\binom{a+b}{a}$.
\end{proof}

One can see that if we choose $b=1$, $n=K$ and $a=KM/N$, then Scheme 1 obviously includes the
Ali-Niesen scheme as a special case. Actually, our hypergraph perspective reveals the essential structure of the Ali-Niesen scheme.

For the general case, we have $R=S/F=\binom{n}{a+b}/\binom{n}{a}$, $F=\binom{n}{a}$, $M/N=Z/F=1-\binom{n-b}{a}/\binom{n}{a}$ and $K=\binom{n}{b}$. In general, it is not easy to measure the performance of this scheme since we can hardly express $R$ or $F$ as functions of $K$. However, $R$ is far better than the uncoded scheme with $R_{U}=K(1-\frac{M}{N})={n\choose b}{{n-b} \choose a}/{n\choose a}$, since $R_{U}/R=\binom{a+b}{a}\gg1$. So the new scheme does make sense. Actually, by choosing proper parameters, our Scheme 1 may contribute a lot of constant rate CCC schemes with $F$ growing sub-exponentially with $K$. An example is as follows.

\begin{remark}\label{scheme1b=2}If we take $b=2$ and then we obtain an $(\binom{n}{2},\binom{n}{a},\binom{n}{a}-\binom{n-2}{a},\binom{n}{a+2})$-PDA. If we set $n=\lambda a$ for some constant $\lambda>1$, then it holds that $R=S/F=\binom{n}{a+2}/\binom{n}{a}\thickapprox(\lambda-1)^2$ and $M/N=Z/F=(\binom{n}{a}-\binom{n-2}{a})/\binom{n}{a}\thickapprox\fr{2\lambda-1}{\lambda^2}$ and by Stirling's formula we have
$$F=\binom{n}{\lambda^{-1} n}=\fr{1+o(1)}{\sqrt{2\pi\lambda^{-1}(1-\lambda^{-1})n}}\cdot2^{n H(\lambda^{-1})}=\ma{O}(K^{-1/4}\cdot2^{\sqrt{2K}H(\lambda^{-1})}),$$
where $H(x)=-x\log_2x-(1-x)\log_2(1-x)$ for $0<x<1$ is the binary entropy function. It is easy to see that under such choice of parameters, $R$ and $M/N$ are both constants independent of $K$ and $F$ grows sub-exponentially with $K$.
\end{remark}

\section{Constructions from the extended $q$-ary sequences} \label{scheme2}

In this section we present our second scheme, including the scheme of Yan et al. as a special case (with only a slight negligible difference).

\fbox{%
  \parbox{\textwidth}{%
  Scheme 2:

Let $q,m,t$ be positive integers with $t\le m$ and let $\mathbb{Z}_q=\{0,1,\ldots,q-1\}$. Then we construct a 3-uniform 3-partite hypergraph $\ma{H}_2$ as follows. The first part of the vertices is $W_1=\{A=(a_1,\dots,a_m):a_i\in\mathbb{Z}_q\}$. That is, $W_1$ consists of the $q$-ary vectors of length $m$ and $|W_1|=q^m$. The second part is $W_2=\{B=(\delta_1,\dots,\delta_t,b_{\delta_1},\dots,b_{\delta_t}):1\le \delta_1 < \dots < \delta_t\le m, b_{\delta_i}\in \mathbb{Z}_q\}$. That is, $W_2$ consists of vectors of length $2t$ with the former $t$ coordinates being distinct integers ranging from $1$ to $m$ listed in a strictly increasing order and the latter $t$ coordinates being in $\mathbb{Z}_q$. So $|W_2|={m \choose t}q^t$. Finally $W_3=\{C=(c_1,\dots,c_m,c_{m+1},\dots,c_{m+t}): c_i \in \mathbb{Z}_q\text{ for }1\le i \le m\text{ and }c_{m+j}\in\mathbb{Z}_{q}\backslash\{q-1\}\text{ for }1\le j \le t\}$. Clearly $|W_3|=q^m(q-1)^t$ and for each $1\le j\le t$, it holds that $c_{m+j}+1\not\equiv 0 \pmod q$.

Three vertices $A\in W_1,~B\in W_2,~C\in W_3$ form an edge $\{A,B,C\}$ if and only if the following conditions hold simultaneously. Note that the computations are made in $\mathbb{Z}_q$.

\begin{enumerate}

\item $a_i=c_i$ for $i\notin\{\delta_1,\dots,\delta_t\}$, $1\le i \le m$;

\item $a_{\delta_j}=c_{\delta_j}+c_{m+j}+1$ for $j=1,2,\dots,t$;

\item $b_{\delta_j}=c_{\delta_j}$ for $j=1,2,\dots,t$.

\end{enumerate}
  }%
}

We have the following observation. Compare the first $m$ coordinates of $C$ with $A$. For $i\notin\{\delta_1,\dots,\delta_t\}$, the corresponding entries are identical. As for the other coordinates, since $c_{m+j}\in\mathbb{Z}_{q}\backslash\{q-1\}$ for $1\le j \le t$, then $c_{m+j}+1\neq0$, so the corresponding entries are distinct due to the second constraint. Thus a necessary condition for $A$ and $C$ lying in an edge is that $(a_1,\dots,a_m)$ and $(c_1,\dots,c_m)$ have exactly $t$ distinct entries. Moreover, we have $a_{\delta_j}\neq b_{\delta_j}$ for $j=1,2,\dots,t$.

\begin{theorem}\label{scheme2the}
$\ma{H}_2$ is a linear and (6, 3)-free 3-uniform 3-partite hypergraph.
\end{theorem}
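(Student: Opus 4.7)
The plan is to check, in order, 3-uniformity, 3-partiteness, linearity, and the (6,3)-free property. The first two are immediate from the construction, since each edge $\{A,B,C\}$ contains exactly one vertex from each $W_i$; the real content lies in the last two, and both exploit the restriction $c_{m+j} \in \mathbb{Z}_q \setminus \{q-1\}$ imposed on the last $t$ coordinates of $C$.

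For linearity, the restriction gives $c_{m+j}+1 \not\equiv 0 \pmod{q}$, so condition~(2) forces $a_{\delta_j} \neq c_{\delta_j}$ for every $j$; combined with condition~(1), this means that in every edge $\{A,B,C\}$ the vector $A$ and the first $m$ coordinates of $C$ disagree in exactly the $t$ positions $\{\delta_1,\ldots,\delta_t\}$ specified by $B$. I would then read off that any two of $A, B, C$ determine the third uniquely: from $(A,C)$, the $\delta$-set is the disagreement set and $b_{\delta_j}=c_{\delta_j}$; from $(A,B)$, conditions~(1)--(3) solve for $C$; from $(B,C)$, conditions~(1) and~(2) solve for $A$. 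In particular, two distinct edges cannot share two vertices.

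For the (6,3)-free property, I would follow the reduction in the proof of Theorem~\ref{equivalence}: linearity together with the 3-partite structure rules out three edges on fewer than six vertices and rules out six-vertex distributions of type $4/1/1$ and $3/2/1$, leaving only the $2/2/2$ case. Write $A_1,A_2\in W_1$, $B_1,B_2\in W_2$, $C_1,C_2\in W_3$; a WLOG argument identical to the one in that proof fixes the configuration (up to symmetry) as $\{A_1,B_1,C_1\}, \{A_2,B_2,C_1\}, \{A_1,B_2,C_2\}$. To derive a contradiction, let $D_i \subseteq [m]$ denote the $\delta$-set of $B_i$ and $c^{(k)}$ the coordinate vector of $C_k$. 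The shared use of $B_2$ in edges~2 and~3 gives, via condition~(3), $c^{(1)}_\delta = b^{(2)}_\delta = c^{(2)}_\delta$ for every $\delta \in D_2$. If $D_1 \neq D_2$, choose $\delta \in D_2 \setminus D_1$: condition~(1) for edge~1 gives $c^{(1)}_\delta = a^{(1)}_\delta$, and condition~(2) for edge~3 gives $a^{(1)}_\delta = c^{(2)}_\delta + c^{(2)}_{m+k}+1$ for the appropriate $k$; combined with $c^{(1)}_\delta = c^{(2)}_\delta$, this forces $c^{(2)}_{m+k}=q-1$, contradicting $C_2 \in W_3$. If instead $D_1 = D_2$, applying the linearity formula ``$(B,C)\mapsto A$'' to edges~1 and~2 recovers both $A_1$ and $A_2$ from the same data $(D_1, C_1)$, forcing $A_1 = A_2$, a contradiction to the $2/2/2$ distribution.

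The main obstacle I anticipate is the index bookkeeping in the $2/2/2$ case: the same coordinate of a $C_k$ plays different roles across the edges that contain $C_k$ depending on whether its position lies inside or outside the relevant $\delta$-set, and one has to track carefully which condition applies at which position. Once the structural fact from linearity --- that $A$ and the first $m$ coordinates of $C$ disagree in exactly the $t$ positions prescribed by $B$ --- is in hand, however, this bookkeeping drives the contradictions essentially mechanically, and the symmetric sub-case (third edge $\{A_2,B_1,C_2\}$) reduces to the same argument with the roles of $D_1$ and $D_2$ swapped.
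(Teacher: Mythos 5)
Your proposal is correct and follows essentially the same route as the paper: the same linearity argument (any two vertices of an edge determine the third) and the same reduction of the (6,3)-free property to the $2/2/2$ triangle configuration, driven by the observation that $A$ and the first $m$ coordinates of $C$ disagree exactly on the $\delta$-set of $B$ together with the restriction $c_{m+j}\neq q-1$. The only difference is organizational: the paper normalizes the shared $\delta$-set to $\{1,\dots,t\}$ and concludes $B=B'$ in one computation, whereas you split into the cases $D_1\neq D_2$ and $D_1=D_2$; both are valid.
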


\begin{proof}
It is routine to check that this hypergraph is 3-uniform and 3-partite. It remains to show the linearity and the (6, 3)-free property of $\ma{H}_2$.
\begin{enumerate}
  \item First we show its linearity. We only need to show that for an edge of the form $\{A,B,C\}$, every two vertices (if they do lie in an edge) uniquely determine the last one. For given $A$ and $B$, from the values of $\delta_1,\dots,\delta_t$, the first $m$ coordinates of $C$ can be determined via the first and third constraints. The last $t$ coordinates of $C$ then could be calculated from the second constraint. Similarly, for given $B$ and $C$, solving $A$ is also straightforward. The case of determining $B$ when $A$ and $C$ are given is a little different. From the observation above, $\{\delta_1,\dots,\delta_t\}$ could be determined by comparing $(a_1,\dots,a_m)$ and $(c_1,\dots,c_m)$ and finding out the coordinates where the corresponding entries differ. Then $\{b_{\delta_1},\dots,b_{\delta_t}\}$ could be determined by the third constraint. Therefore, the linearity of $\ma{H}_2$ follows.

  \item For the (6, 3)-property, by the proof of Theorem \ref{equivalence} we only need to consider the case where six vertices are chosen averagely from three vertex parts. For any six vertices $A,A',B,B',C,C'$, suppose they induce three edges, then without loss of generality we assume that we have $\{A,B,C\}$, $\{A',B,C'\}$ and $\{A,B',C'\}$. By using permutations on indices we also can assume that $B=(1,2,\dots,t,b_1,\dots,b_t)$. Then it holds that $C=(b_1,\dots,b_t,c_{t+1},\dots,c_{m+t})$, $A=(b_1+c_{m+1}+1,\dots,b_t+c_{m+t}+1,c_{t+1},\dots,c_m)$ and $C'=(b_1,\dots,b_t,c'_{t+1},\dots,c'_{m+t})$. Comparing the first $t$ entries of $A$ and $C'$, they are all different since $c_{j}+1\neq0$ for $m+1\le j \le m+t$. So to guarantee that $\{A,B',C'\}$ does form an edge we must have $B'=(1,2,\dots,t,b'_1,\dots,b'_t)$ and then $C'=(b'_1,\dots,b'_t,c'_{t+1},\dots,c'_{m+t})$. This infers that $b_i=b'_i$ for $1\le i \le t$, i.e., $B$ and $B'$ are identical, a contradiction.
\end{enumerate}

\end{proof}

\begin{theorem}
    For every three positive integers $q,~t,~m$ with $t\le m$, there exists an $\binom{m}{t}$-regular $(\binom{m}{t}q^t,q^m,q^m-q^{m-t}(q-1)^t,q^m(q-1)^t)$-PDA.
\end{theorem}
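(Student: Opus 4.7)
The plan is to apply Theorem \ref{equivalence} to the hypergraph $\ma{H}_2$ from Scheme~2, using Theorem \ref{scheme2the}, which already certifies that $\ma{H}_2$ is linear and (6,3)-free. I would designate $\ma{F}=W_1$, $\ma{K}=W_2$, $\ma{S}=W_3$, so that immediately $F=|W_1|=q^m$, $K=|W_2|=\binom{m}{t}q^t$, and $S=|W_3|=q^m(q-1)^t$. Three of the four declared parameters are thus matched by definition, and it only remains to pin down $Z$ by checking the column-degree condition and to verify the regularity count.

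For the value of $Z$, Theorem \ref{equivalence} requires that each $k\in\ma{K}$ be incident with exactly $F-Z$ edges, so I would fix a vertex $B=(\delta_1,\ldots,\delta_t,b_{\delta_1},\ldots,b_{\delta_t})\in W_2$ and count edges through it. By the linearity established in Theorem \ref{scheme2the}, every admissible $C$ determines a unique $A$ completing an edge with $B$, so edges through $B$ are in bijection with admissible $C\in W_3$. Constraint (3) forces $c_{\delta_j}=b_{\delta_j}$ for $1\le j\le t$; the coordinates $c_i$ with $i\notin\{\delta_1,\ldots,\delta_t\}$ are free in $\mathbb{Z}_q$; and each $c_{m+j}$ is free in $\mathbb{Z}_q\setminus\{q-1\}$. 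This gives exactly $q^{m-t}(q-1)^t$ choices, so $F-Z=q^{m-t}(q-1)^t$ and hence $Z=q^m-q^{m-t}(q-1)^t$, as claimed.

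For the $\binom{m}{t}$-regularity I would fix $C\in W_3$ and count edges through $C$. Specifying an edge through $C$ is the same as specifying an index set $\{\delta_1,\ldots,\delta_t\}\s[m]$ of size $t$: once this set is chosen, constraint (3) determines $(b_{\delta_1},\ldots,b_{\delta_t})$ from $C$ and thus determines $B\in W_2$, and constraints (1)–(2) then determine $A\in W_1$. The condition $c_{m+j}\in\mathbb{Z}_q\setminus\{q-1\}$ guarantees $a_{\delta_j}\neq c_{\delta_j}$, so distinct index sets yield distinct vertices $A$ (the index set is recoverable from $(A,C)$ as the positions where they disagree) and hence distinct edges. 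There are $\binom{m}{t}$ such index sets, so every $C$ has degree exactly $\binom{m}{t}$.

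I do not anticipate a serious obstacle, since all structural content — linearity and the (6,3)-free property — is already handled by Theorem \ref{scheme2the}, and the remaining work is combinatorial bookkeeping. The only delicate point is justifying the regularity count: one must be sure that the index set $\{\delta_1,\ldots,\delta_t\}$ is an unconstrained $\binom{m}{t}$-fold choice, and this is precisely what the restriction $c_{m+j}\neq q-1$ buys, by ensuring that each coordinate shift in constraint (2) is nonzero.
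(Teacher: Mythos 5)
Your proposal is correct and follows essentially the same route as the paper: identify $\ma{F}=W_1$, $\ma{K}=W_2$, $\ma{S}=W_3$, invoke Theorems \ref{equivalence} and \ref{scheme2the}, compute the degree of each $B\in\ma{K}$ as $q^{m-t}(q-1)^t$, and count $\binom{m}{t}$ index sets per $C$ for the regularity. The only cosmetic difference is that you count the degree of $B$ by enumerating admissible $C$'s while the paper enumerates admissible $A$'s; both give the same value.
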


\begin{proof}
    Take $\ma{F}=W_1,~\ma{K}=W_2$ and $\ma{S}=W_3$. Consider the number of vertices in $\ma{F}$ adjacent with some given vertex $B\in\ma{K}$. By the construction of Scheme 2 it is easy to see that every vertex in $\ma{K}$ is incident with exactly $q^{m-t}(q-1)^t$ edges, since there are $q$ choices for the coordinates $a_i,~i\not\in\{\delta_1,\ldots,\delta_t\}$ and $q-1$ choices for the coordinates $a_{\delta_j},~1\le j\le t$ (we have $a_{\delta_j}\neq b_{\delta_j}$ for each $1\le j\le t$).
    Therefore, by Theorems \ref{equivalence} and \ref{scheme2the} we can conclude that there exists a $(K,F,Z,S)$-PDA with $K=\binom{m}{t}q^t,~F=q^m,~Z=q^m-q^{m-t}(q-1)^t$ and $S=q^m(q-1)^t$. Furthermore, it is $\binom{m}{t}$-regular since for any given $C\in\ma{S}$ we have $\binom{m}{t}$ choices for $\{\delta_1,...,\delta_t\}$ and once $C$ and $\{\delta_1,...,\delta_t\}$ are fixed, we can determine $A$ and $B$ using the three constraints.
\end{proof}

One can see that if we choose $t=1$, then Scheme 2 induces an $(mq,q^m,q^{m-1},q^m(q-1))$-PDA, which is very close to the first scheme of Yan et al. in Remark \ref{equivtang} (only missing the last $q$ users in their scheme). Actually, we can possibly remove this difference by adding more vertices to $V_2$. We do not attempt to do so since if we fix $q$ and let $m$ approximate infinity, then the difference is negligible.

For the general case, we have $R=S/F=(q-1)^t$, $F=q^m$ and $K=\binom{m}{t}q^t$. We also have $M/N=Z/F=1-(1-1/q)^t$. Let $q$ and $t$ be fixed and let $m$ approximate infinity, it holds that $R$ and $M/N$ are constants independent of $K$. Moreover, if we solve $m$ from $K$ as $m=\Theta((K/q^t)^{1/t})$, then we can write $F$ as $F=\Theta(q^{K^{1/t}/q})$. Obviously, $F$ increases sub-exponentially with $K$ if we set $t\ge 2$.

We further mention another advantage of our hypergraph perspective. In \cite{yan2015placement} the authors actually present two symmetric constructions, i.e., a $(q(m+1),q^m,q^{m-1},q^{m+1}-q^m)$-PDA for $M/N=1/q$ and a $(q(m+1),q^{m+1}-q^m,(q-1)^2q^{m-1},q^m)$-PDA for $M/N=(q-1)/q$. In \cite{yan2015placement} it takes quite a while to state these two constructions separately. However, from our hypergraph perspective, these two constructions are essentially the same: if we are aware of any one of the two, we know the other one immediately. Assume that the first construction is represented by a hypergraph $\ma{G}_1$ with vertex parts $V_1=\ma{F},~V_2=\ma{K}$ and $V_3=\ma{S}$, then the second one actually is represented by a symmetric hypergraph $\ma{G}_2$ with vertex parts $V_1=\ma{S},~V_2=\ma{K}$ and $V_3=\ma{F}$. Therefore, if $\ma{G}_1$ is linear and (6, 3)-free, then so is $\ma{G}_2$. That is, from a given CCC scheme represented by hypergraphs, we may directly obtain a symmetric one by switching the roles of the vertex parts $\ma{F}$ and $\ma{S}$. Thus we have the following corollary.

\begin{corollary}\label{symmetric}
        For every three positive integers $q,~t,~m$, there exists an $(\binom{m}{t}q^t,q^m(q-1)^t,q^m(q-1)^t-q^{m-t}(q-1)^t,q^m)$-PDA with $R=1/(q-1)^t$ and $M/N=Z/F=1-1/q^t$.
\end{corollary}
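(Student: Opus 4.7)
The plan is to exploit the fact that the linearity and the $(6,3)$-free property of a $3$-uniform $3$-partite hypergraph are purely combinatorial conditions, invariant under any relabeling of the three parts. Starting from the hypergraph $\ma{H}_2$ constructed in Scheme 2, which by Theorem \ref{scheme2the} is linear and $(6,3)$-free with parts $W_1, W_2, W_3$ of sizes $q^m$, $\binom{m}{t}q^t$, and $q^m(q-1)^t$ respectively, I would simply re-designate the roles of the parts by taking $\ma{F}' = W_3$, $\ma{K}' = W_2$, and $\ma{S}' = W_1$. The resulting hypergraph $\ma{H}_2'$ is the same combinatorial object as $\ma{H}_2$; in particular it is still linear and $(6,3)$-free. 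This is precisely the general principle, indicated just before the corollary, that from any PDA we obtain a ``symmetric'' one by swapping the parts $\ma{F}$ and $\ma{S}$.

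Next, I would invoke the ``if'' direction of Theorem \ref{equivalence} to read off a PDA from $\ma{H}_2'$. The parameter sizes are immediate: $K' = |\ma{K}'| = |W_2| = \binom{m}{t}q^t$, $F' = |\ma{F}'| = |W_3| = q^m(q-1)^t$, and $S' = |\ma{S}'| = |W_1| = q^m$. The quantity $F' - Z'$, which by the same theorem equals the degree of each vertex in $\ma{K}'$, is unchanged by the relabeling and therefore equals the degree of $B \in W_2$ computed in Scheme 2, namely $q^{m-t}(q-1)^t$. Hence $Z' = q^m(q-1)^t - q^{m-t}(q-1)^t$, giving exactly the claimed $(K', F', Z', S')$-PDA.

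Finally a direct computation verifies $R = S'/F' = q^m / (q^m(q-1)^t) = 1/(q-1)^t$ and $M/N = Z'/F' = 1 - 1/q^t$, matching the statement. There is essentially no obstacle in this argument: the entire point is that the hypergraph model makes the $\ma{F}\leftrightarrow\ma{S}$ symmetry transparent, whereas it is rather clumsy to describe directly at the array level. The only care required is to check that the degree of $\ma{K}$-vertices, which controls $Z$ through the identity $F-Z = \deg(k)$, is carried over correctly under the relabeling — which it obviously is, since we have not changed a single edge of the underlying hypergraph.
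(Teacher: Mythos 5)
Your proof is correct and follows exactly the route the paper intends: the corollary is obtained by swapping the roles of the parts $\ma{F}$ and $\ma{S}$ in the hypergraph $\ma{H}_2$ of Scheme 2, noting that linearity and the $(6,3)$-free property are invariant under relabeling, and reading off the new parameters (with $Z'$ determined by the unchanged degree $q^{m-t}(q-1)^t$ of the $\ma{K}$-vertices) via Theorem \ref{equivalence}. The parameter computations for $R$ and $M/N$ also match the paper's.
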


\begin{remark}\label{scheme2t=2}If we take $t=2$ and then we obtain an
$(\binom{m}{2}q^2,q^m(q-1)^2,q^m(q-1)^2-q^{m-2}(q-1)^2,q^m)$-PDA with $M/N=Z/F=1-1/q^2$.
\end{remark}

\section{Comparison with previous constructions}

In this section, we compare our new constructions with the existing ones. First of all, we summarize all the constructions in Table \ref{comparison}.

\begin{table}[h]
\caption{Comparison of some CCC schemes}\label{comparison}
\centering
\begin{tabular}{|c|c|c|c|c|c|}
  \hline
  & & $K$ & $M/N$ & $F$ & $R$   \\\hline
  Construction 1&A-N \cite{AN} & $K$ & $\fr{1}{q}$ & $\binom{K}{\fr{K}{q}}$ & $\fr{K}{K+q}(q-1)$ \\\hline
  Construction 2&A-N \cite{AN} & $K$ & $\fr{q-1}{q}$ & $\binom{K}{\fr{K}{q}}$ & $\fr{K}{q+K(q-1)}$ \\\hline
  Construction 3&Yan et al. \cite{yan2015placement} & $K$ & $\fr{1}{q}$ & $q^{\fr{K}{q}-1}$ & $q-1$ \\\hline
  Construction 4&Yan et al. \cite{yan2015placement} & $K$ & $\fr{q-1}{q}$ & $(q-1)q^{\fr{K}{q}-1}$ & $\fr{1}{q-1}$ \\\hline
  Construction 5&Scheme 1 & $\binom{n}{b}$ & $1-\fr{\binom{n-b}{a}}{\binom{n}{a}}$ & $\binom{n}{a}$ & $\fr{\binom{n}{a+b}}{\binom{n}{a}}$ \\\hline
  Construction 6&Scheme 1: $b=2,~n=\lambda a$ & $\binom{n}{2}$ & $\thickapprox\fr{2\lambda-1}{\lambda^2}$ & $\binom{n}{\fr{n}{\lambda}}$ & $\thickapprox(\lambda-1)^2$ \\\hline
  Construction 7&Scheme 2 & $\binom{m}{t}q^t$ & $1-(\fr{q-1}{q})^t$ & $q^m$ & $(q-1)^t$ \\\hline
  Construction 8&Scheme 2: symmetric form & $\binom{m}{t}q^t$ & $1-\fr{1}{q^t}$ & $q^m(q-1)^t$ & $\fr{1}{(q-1)^t}$ \\\hline
  Construction 9&Scheme 2: symmetric form, $t=2$ & $\binom{m}{2}q^2$ & $1-\fr{1}{q^2}$ & $q^m(q-1)^2$ & $\fr{1}{(q-1)^2}$ \\\hline
\end{tabular}
\end{table}

Table \ref{comparison} contains a variety of different PDAs. It may be hard to find out the advantages or the disadvantages of them since the parameters are confusable. Therefore, we also present comparisons of some of the schemes under unified parameters. We will take the constructions in Remarks \ref{scheme1b=2} (Construction 6) and \ref{scheme2t=2} (Construction 9) for comparison, since Remark \ref{scheme1b=2} presents a PDA with small $M/N$ and Remark \ref{scheme2t=2} presents a PDA with large $M/N$. For $M/N=1/q,$ we compare Constructions 1, 3 and 6. For Construction 6, we set $K:=\binom{n}{2}$, $\lambda:=2q$, then $n\thickapprox\sqrt{2K}$, $M/N\thickapprox\fr{2\lambda-1}{\lambda^2}\thickapprox1/q$ and $R\thickapprox(2q-1)^2$ and by Remark \ref{scheme1b=2} we have $F\thickapprox\sqrt{\fr{2^{1/2}q^2}{\pi(2q-1)K^{1/2}}}\cdot2^{\sqrt{2K}H(\fr{1}{2q})}$. For $M/N=(q-1)/q$, we compare Constructions 2, 4 and 9. For Construction 9, we set $K:=\binom{m}{2}q^2$, $q^2:=q$, then $m\thickapprox\sqrt{2K/q}$, $M/N=(q-1)/q$, $F\thickapprox\sqrt{q}^{\sqrt{2K/q}}(\sqrt{q}-1)^2= q^{\sqrt{K/2q}}(\sqrt{q}-1)^2$ and $R=1/(\sqrt{q}-1)^2$. The comparisons are listed in Tables \ref{1/q} and \ref{(q-1)/q}.

For $M/N=1/q$, from Table \ref{1/q} one can see that the rate of Construction 6 is almost as large as four times of the square of that of Constructions 1 and 3. But the magnitude of $F$ reduces significantly. Let $q$ be fixed, then $F$ is reduced from $\Omega(q^{K/q})$ to $\ma{O}(q^{\sqrt{8K}/q})$. For $M/N=(q-1)/q$, the advantage of our construction is more remarkable. In Table \ref{(q-1)/q}, the transmission rates of Constructions 2, 4 and 9 are almost the same. However, $F$ is reduced from $\Omega(q^{K/q})$ to $\ma{O}(q^{\sqrt{K/2q}})$.

\begin{table}[h]
\caption{Comparison under $M/N=1/q$}\label{1/q}
\centering
\begin{tabular}{|c|c|c|c|c|}
  \hline
   & $K$ & $M/N$ & $F$ & $R$   \\\hline
  Construction 1 & $K$ &             $\fr{1}{q}$ & $\approx\fr{q}{\sqrt{2\pi K(q-1)}}\cdot q^{\fr{K}{q}}\cdot(\fr{q}{q-1})^{K(1-\fr{1}{q})}$ & $\fr{K}{K+q}(q-1)$ \\\hline
  Construction 3 & $K$ &             $\fr{1}{q}$ & $q^{\fr{K}{q}-1}$ & $q-1$ \\\hline
  Construction 6 & $K$ & $\thickapprox\fr{1}{q}$ & $\thickapprox\sqrt{\fr{2^{1/2}q^2}{\pi(2q-1)K^{1/2}}}\cdot2^{\sqrt{2K}H(\fr{1}{2q})}$ & $\thickapprox(2q-1)^2$ \\\hline
\end{tabular}
\end{table}

\begin{table}[!h]
\caption{Comparison under $M/N=(q-1)/q$}\label{(q-1)/q}
\centering
\begin{tabular}{|c|c|c|c|c|}
  \hline
   & $K$ & $M/N$ & $F$ & $R$   \\\hline
  Construction 2 & $K$ &             $\fr{q-1}{q}$ & $\approx\fr{q}{\sqrt{2\pi K(q-1)}}\cdot q^{\fr{K}{q}}\cdot(\fr{q}{q-1})^{K(1-\fr{1}{q})}$ & $\fr{K}{q+K(q-1)}$ \\\hline
  Construction 4 & $K$ &             $\fr{q-1}{q}$ & $(q-1)q^{\fr{K}{q}-1}$ & $\fr{1}{q-1}$ \\\hline
  Construction 9 & $K$ & $\fr{q-1}{q}$ & $\thickapprox q^{\sqrt{\fr{K}{2q}}}(\sqrt{q}-1)^2$ & $\fr{1}{(\sqrt{q}-1)^2}$ \\\hline
\end{tabular}
\end{table}

\section{Conclusion and related problems} \label{conclusion}

In this paper, we view the problem of constructing a CCC scheme or a PDA design in a hypergraph perspective. The problem gets related to the famous $(6, 3)$-problem in extremal combinatorics. From this point of view, constructing caching schemes turns into constructing linear and (6, 3)-free 3-partite 3-uniform hypergraphs. We offer two schemes, generalizing the Ali-Niesen scheme and the scheme of Yan et al. respectively. The parameters in our schemes are flexible so that they actually contribute two large classes of caching schemes.

What is the smallest possible $F$ such that a constant rate CCC scheme exists? Our constructions indicate that $F$ increasing sub-exponentially with $K$ suffices. As suggested by Theorem \ref{linearpda}, constant rate CCC schemes with $F$ growing linearly with $K$ do not exist. The problem is still not fully understood, for example, we do not know whether constant rate CCC schemes with $F$ growing polynomially with $K$ exist or not. We leave it as an open problem.

{\bf Open Problem:} Let $M/N$ and $R$ be both constants independent of $K$. Find out the minimal $F=f(K)$ such that a $(K,F,Z,S)$-PDA with $S=RF$, $Z=FM/N$ does exist. Especially, prove or disprove that $F$ growing polynomially with $K$ suffices.

Except for the hypergraph perspective, we would like to provide two more interesting approaches to study the PDA design problem.

{\bf Partial Latin square with the Blackburn property:} A Latin square is an $n\times n$ matrix $\ma{L}$ filled with $n$ different symbols $\{1,\ldots,n\}$, each occurring exactly once in each row and exactly once in each column. A partial Latin square is a submatrix formed by several rows and columns of $\ma{L}$. We say that a partial Latin square $\ma{P}$ has the Blackburn property if whenever two distinct cells $\ma{P}_{a,b}$ and $\ma{P}_{c,d}$ are occupied by the same symbol, the opposite corners $\ma{P}_{a,d}$ and $\ma{P}_{b,c}$ are blank. We further call this partial Latin square regular if each column has the same number of symbols. The problem of filling as many cells without violating this property as possible is posed by Blackburn \cite{Blackburn} and studied by Wanless \cite{latin}. One can verify that the definition of a PDA array is indeed equivalent to that of a regular partial Latin square with the Blackburn property. For example, one can argue that the constraints C1, C2 and C3 are equivalent to the regular property, the Latin property and the Blackburn property, respectively.

{\bf Strong edge coloring for bipartite graphs:} A strong edge-coloring of a graph $\ma{G}$ is an edge-coloring in which every color class is an induced matching; that is, any two vertices belonging to distinct edges with the same color are not adjacent. The strong chromatic index $S(\ma{G})$ is the minimum number of colors in a strong edge-coloring of $\ma{G}$. If we are given a PDA $\ma{P}$, then we can construct a bipartite graph with vertex sets $\ma{F}$ and $\ma{K}$ such that $j\in\ma{F}$ and $k\in\ma{K}$ are connected by an edge if and only if $p_{j,k}\in\ma{S}$. We color this edge by a color $s\in\ma{S}$ if $p_{j,k}=s$. By the constraints C2 and C3 one can verify that such a coloring is a strong edge coloring. Then $S(\ma{G})$ is the minimal $S$ such that a PDA does exist. See \cite{coloring} for an introduction of the strong edge coloring problem.
\bibliographystyle{IEEEtrans}
\bibliography{caching}

\end{document}